\pdfoutput=1
\documentclass[sigconf,natbib=true,nonacm]{acmart}
\AtBeginDocument{%
  }
\usepackage[utf8]{inputenc} 
\usepackage[T1]{fontenc}    
\usepackage{url}            
\usepackage{nicefrac}       
\usepackage{microtype}      
\usepackage{xcolor}         
\usepackage{bm}
\usepackage{bbm}
\usepackage{todonotes}
\usepackage{algorithm}
\usepackage{algpseudocode}
\usepackage{multicol}
\usepackage{caption}
\usepackage{multirow}
\usepackage{makecell}
\usepackage{booktabs}           
\usepackage{amsfonts}           
\usepackage{graphicx}           
\usepackage{duckuments}         
\usepackage{bold-extra}
\usepackage{amsmath}
\usepackage{amsmath, mathtools, bm}
\usepackage[dvipsnames]{xcolor,colortbl}
\definecolor{Gray}{gray}{0.9}

\begin{document}

\title{CURE:Circuit-Aware Unlearning for LLM-based Recommendation}

\author{Ziheng Chen}
\email{albertchen1993pokemon@gmail.com}
\affiliation{%
  \institution{Walmart Global Tech}
  \city{Sunnyvale, CA}
  \country{USA}
}

\author{Jiali Cheng}
\email{jiali\_cheng@uml.edu}
\affiliation{%
  \institution{University of Massachusetts Lowell}
  \city{Lowell, MA}
  \country{USA}
}

\author{Zezhong Fan}
\email{zfan2274@gmail.com}
\affiliation{%
  \institution{Walmart Global Tech}
  \city{Sunnyvale, CA}
  \country{USA}
}

\author{Hadi Amiri}
\email{hadi_amiri@uml.edu}
\affiliation{%
  \institution{University of Massachusetts Lowell}
  \city{Lowell, MA}
  \country{USA}
}

\author{Yunzhi Yao}
\email{yyztodd@zju.edu.cn}
\affiliation{%
  \institution{Zhejiang University}
  \city{Hangzhou}
  \country{China}
}

\author{Xiangguo Sun}
\email{xiangguosun@cuhk.edu.hk}
\affiliation{%
  \institution{The Chinese University of Hong Kong}
  \city{Hong Kong}
  \country{China}
}

\author{Yang Zhang}
\email{zhangy@nus.edu.sg}
\affiliation{%
  \institution{National University of Singapore}
  \country{Signapore}
}

\newcommand{\X}{\mathcal{X}}
\newcommand{\Y}{\mathcal{Y}}
\newcommand{\R}{\mathbb{R}}
\newcommand{\paramspace}{\boldsymbol{\Theta}}
\newcommand{\params}{\boldsymbol{\theta}}
\newcommand{\model}{f_{\params}}
\newcommand{\modeledit}{f_{\params'}}
\newcommand{\inst}{\boldsymbol{x}}
\newcommand{\cfinst}{\boldsymbol{\widetilde{x}}}
\newcommand{\pred}{\hat{y}}
\newcommand{\instedit}{\inst_c}
\newcommand{\instq}{\inst_q}
\newcommand{\labeledit}{y_c}
\newcommand{\labelq}{y_q}
\newcommand{\prededit}{\hat{y}_c}
\newcommand{\editorparamspace}{\boldsymbol{\Psi}}
\newcommand{\editorparams}{\boldsymbol{\psi}}
\newcommand{\dataset}{\mathcal{D}}
\newcommand{\testset}{\dataset_{q}}
\newcommand{\datasetedit}{\dataset_c}
\newcommand{\loss}{\ell}
\newcommand{\Loss}{\mathcal{L}}
\newcommand{\method}{\textsc{FROG}\xspace}
\newcommand{\softmax}{\text{softmax}}

\newcommand{\Prob}{\mathbb{P}}
\newcommand{\Z}{\mathbb{Z}}
\newcommand{\E}{\mathbb{E}}
\newcommand{\G}{\mathcal{G}}
\newcommand{\advG}{\widetilde{\G}} 
\newcommand{\Gobs}{\G^{\text{obs}}}
\newcommand{\U}{\mathcal{U}}
\newcommand{\I}{\mathcal{I}}
\newcommand{\V}{\mathcal{V}}
\newcommand{\Vnew}{\V^{\text{new}}}
\newcommand{\Vadv}{\V^{\text{adv}}}
\newcommand{\edges}{\mathcal{E}}
\newcommand{\edgesnew}{\edges^{\text{new}}}
\newcommand{\edgesobs}{\edges^{\text{obs}}}
\newcommand{\edgesadv}{\edges^{\text{adv}}}
\newcommand{\bgraph}{\G=(\U, \I, \edges)}
\newcommand{\neigh}{\mathcal{N}}
\newcommand{\adjM}{A}
\newcommand{\advadjM}{\widetilde{A}}
\newcommand{\adjMij}{{A}_{i,j}}
\newcommand{\train}{\dataset_{\text{train}}}
\newcommand{\test}{\dataset_{\text{test}}}
\newcommand{\features}{\mathcal{X}}
\newcommand{\labels}{\mathcal{Y}}
\newcommand{\hypspace}{\mathcal{H}}
\newcommand{\w}{\bm{\omega}}
\newcommand{\h}{\bm{h}}
\newcommand{\advh}{\widetilde{\bm{h}}}
\newcommand{\hyp}{h_{\params}}
\newcommand{\gnn}{g(\adjM, \X; \W)}
\newcommand{\Cons}{\mathcal{L}_{\text{cstr}}}
\newcommand{\ladv}{\ell_{\text{adv}}}
\newcommand{\ldist}{\ell_{\text{dist}}}
\newcommand{\lnew}{\ell_{\text{new}}}
\newcommand{\LF}{\Loss_{fa}}
\newcommand{\LC}{\Loss_{cf}}
\newcommand{\ind}{\mathbbm{1}}
\newcommand{\inputs}{\mathcal{V}}
\newcommand{\outputs}{\mathcal{Y}}
\newcommand{\insta}{X_i}
\newcommand{\weights}{\boldsymbol{\omega}}
\newcommand{\forget}{\mathcal{D}_f}
\newcommand{\remain}{\mathcal{D}_r}
\newcommand{\all}{\mathcal{D}}
\newcommand{\LLMori}{\mathcal{M}_{\theta}}
\newcommand{\un}{\mathcal{M}_{un}}
\newcommand{\optimal}{\mathcal{\mathcal{M}}_{\theta^{*}}}
\newcommand{\indep}{\perp \!\!\! \perp}
\newcommand{\cnf}{\bold{Z_{cnf}}}
\newcommand{\graph}{\mathcal{G}}
\newcommand{\node}{v}
\newcommand{\edge}{\mathcal{E}}
\newcommand{\nshare}{\boldsymbol{\Tilde{\theta}_{sh}}}
\newcommand{\nforget}{\boldsymbol{\Tilde{\theta}_{f}}}
\newcommand{\nretain}{\boldsymbol{\Tilde{\theta}_{r}}}
\newcommand{\selshare}{\boldsymbol{\Hat{\theta}_{sh}}}
\newcommand{\selforget}{\boldsymbol{\Hat{\theta}_{f}}}
\newcommand{\selretain}{\boldsymbol{\Hat{\theta}_{r}}}
\newcommand{\orishare}{\boldsymbol{\theta}_{sh}}
\newcommand{\oriforget}{\boldsymbol{\theta}_{f}}
\newcommand{\oriretain}{\boldsymbol{\theta}_{r}}
\newcommand{\gradfshare}{\mathbf{g}_{sh}^{F}}
\newcommand{\gradfforget}{\mathbf{g}_{f}^{F}}
\newcommand{\gradfretain}{\mathbf{g}_{r}^{F}} 
\newcommand{\gradrshare}{\mathbf{g}_{sh}^{R}}
\newcommand{\gradrforget}{\mathbf{g}_{f}^{R}} 
\newcommand{\gradrretain}{\mathbf{g}_{r}^{R}}
\newcommand{\wforget}{\omega_f}
\newcommand{\wretain}{\omega_r}
\newcommand{\inpori}{\mathbf{x}_{u}}
\newcommand{\inpcf}{\mathbf{x}_{u}^{*}}

\newcommand{\Uset}{\mathcal{U}}
\newcommand{\Iset}{\mathcal{I}}
\newcommand{\LLM}{\mathcal{M}}


\newcommand{\xg}[1]{\todo[inline,color=red!60]{\textbf{xiangguo:} #1}}
\newcommand{\xgr}[1]{\textcolor{red}{\textbf{xiangguo:} #1}}

\newcommand{\zezhong}[1]{\todo[inline,color=pink!60]{\textbf{zezhong:} #1}}
\newcommand{\yang}[1]{\todo[inline,color=green!60]{\textbf{Yang:} #1}}
\newcommand{\jiali}[1]{\todo[inline,color=orange!60]{\textbf{Jiali:} #1}}
\newcommand{\ziheng}[1]{\todo[inline,color=blue!60]{\textbf{Ziheng:} #1}}
\newcommand{\hadi}[1]{\todo[inline,color=yellow!60]{\textbf{Hadi:} #1}}
\newcommand{\sun}[1]{\todo[inline,color=cyan!60]{\textbf{sun:} #1}}
\newcommand{\yunzhi}[1]{\todo[inline,color=magenta!60]{{\bf Yunzhi:} #1}}

\begin{abstract}
Recent advances in large language models (LLMs) have opened new opportunities for recommender systems by enabling rich semantic understanding and reasoning about user interests and item attributes. However, as privacy regulations tighten, incorporating user data into LLM-based recommendation (LLMRec) introduces significant privacy risks, making unlearning algorithms increasingly crucial for practical deployment. Despite growing interest in LLMRec unlearning, most existing approaches formulate unlearning as a weighted combination of forgetting and retaining objectives while updating model parameters in a uniform manner. Such formulations inevitably induce gradient conflicts between the two objectives, leading to unstable optimization and resulting in either ineffective unlearning or severe degradation of model utility. Moreover, the unlearning procedure remains largely black-box, undermining its transparency and trustworthiness.
\par\indent
To tackle these challenges, we propose CURE, a circuit-aware unlearning framework that disentangles model components into functionally distinct subsets and selectively updates them. Here, a circuit refers to a computational subgraph that is causally responsible for task-specific behaviors.
Specifically, we extract the core circuits underlying item recommendation and analyze how individual modules within these circuits contribute to the forget and retain objectives.
Based on this analysis, these modules are categorized into \textit{forget-specific}, \textit{retain-specific}, and \textit{task-shared} groups, each subject to function-specific update rules to mitigate gradient conflicts during unlearning. Experiments on real-world datasets show that our approach achieves more effective unlearning than existing baselines.
\end{abstract}
\maketitle

\section{Introduction}
\label{sec:intro} 
Large language models (LLMs) have recently shown remarkable capability in understanding and generating human-like text, which has spurred growing interest in leveraging them as recommender systems (LLMRec).
By exploiting powerful reasoning abilities and rich open-world knowledge, LLMRecs can better model user preferences and item semantics through instruction tuning on historical interactions. However, directly incorporating user behavior data, such as purchase records, raises critical ethical and privacy concerns, including information leakage and the risk of malicious data injection. To address these challenges, recommendation unlearning~\cite{wang2025towards,hu2025exact} has emerged as a promising paradigm that aims to remove the influence of sensitive data from pre-trained LLMRecs, while preserving their overall utility.

Existing methods for LLMRec unlearning can be categorized into two types: approximate unlearning and exact unlearning. Exact unlearning relies on retraining affected sub-models~\citep{bourtoule2021machine}, while approximate unlearning typically adopts a teacher–student framework to balance data removal and model utility~\citep{frog,fan2025towards,cheng2025tool}. However, most approximate unlearning methods 
formulate unlearning as a weighted sum of the forget and retain losses, using a static factor to balance the two objectives~\citep{fan2023salun,cheng2024mu,fan2025towards}. Despite its simplicity, this formulation overlooks the fact that optimizing one objective can substantially impede the other. From the optimization point of view, a key cause of this issue lies in \textit{conflicting gradients}~\cite{yu2020gradient,yi2025gradient}, where the gradients of the forget and retain losses at the neuron level point in opposing directions. As a result, updates intended to improve one objective may inadvertently harm the other, leading to either insufficient forgetting or severe degradation of model utility

Additionally, existing LLMRec unlearning methods update model parameters in a largely black-box manner, without explicit knowledge of which internal modules (e.g., attention heads and MLPs) encode the information to be forgotten~\citep{chen2022recommendation,chen2024cure4rec}. As a result, it is unclear whether modules containing critical information are effectively updated. This lack of transparency hinders the interpretability of the unlearning process and undermines its trustworthiness.

Recent advances in mechanistic interpretability shed light on the internal mechanisms of LLMs by identifying sparse computational subnetworks (“circuits”) responsible for specific model behaviors, offering a reliable way to understand the above challenges at their root. A key finding is that knowledge in LLM is dynamically activated through specific computational circuits, each specializing in different functional roles and jointly contributing to the final decision~\cite{conmy2023towards,syed2024attribution,cheng2026toward}. Consequently, gradient conflicts arise when the circuits responsible for the forget and retain sets become entangled, particularly when the shared modules are driven toward conflicting optimization directions under the two objectives. This insight suggests a transparent solution: disentangling the conflicting circuits and optimizing them separately for unlearning. 

Inspired by this insight, we propose CURE, a circuit-aware framework for LLMRec unlearning. Instead of globally optimizing competing objectives, CURE decouples unlearning into two stages: \textit{crucial circuit extraction} and \textit{task-specific parameter updating}. In the first stage, CURE employs a gradient-based analysis to localize internal computational pathways that are most responsible for the forget and retain sets. To precisely identify these circuits under long input prompts encoding user interaction histories, we leverage the user–item graph to construct slight input perturbations and detect influential modules through contrastive activation analysis. In the second stage, modules along the identified circuits are categorized according to their functional roles and selectively updated, enabling semantically aware control over parameters while effectively mitigating gradient conflicts during unlearning. We also demonstrate the effectiveness of CURE through a theoretical analysis. Overall, our key contributions are as follows:
\begin{itemize}
\item \textbf{Motivation:} We introduce a circuit-aware perspective for understanding gradient conflicts in LLMRec unlearning, which leads to a transparent unlearning approach via disentangling conflicting circuits. 


\item \textbf{Method:} We propose a novel circuit-aware framework for LLMRec unlearning that first identifies influential circuits for the forget and retain sets, and then selectively updates the associated modules according to their functional roles. We further provide a theoretical analysis to guarantee its validity. Furthermore, the proposed framework is model-agnostic and can be readily applied to a wide range of LLM backbones for recommendation unlearning. 
\item \textbf{Performance:} CURE achieves $18\%$ and $6\%$ improvements over the baseline in unlearning efficiency and model utility, respectively. Moreover, it is $\mathbf{3.5}\times$ faster than the baseline.
\end{itemize}
\section{Preliminary}
\label{sec:Preliminary}
\subsection{LLM as Recommender}
\label{LLMREC}
We consider the standard collaborative filtering recommendation task.
Let $\mathcal{U}=\{u_1,\ldots,u_m\}$ be a set of $m$ users, and $\mathcal{I}=\{i_1,\ldots,i_n\}$ be a set of $n$ items.
The observed user–item interactions are represented by a binary interaction matrix $A\in{0,1}^{m\times n}$, where $A_{u,i}=1$ indicates that user $u$ has interacted with item $i$, and $A_{u,i}=0$ otherwise. We further model the interaction data as user-item graph $
\mathcal{G}=(\mathcal{V},\mathcal{E})$, where $\mathcal{V}=\mathcal{U}\cup\mathcal{I}$ and $\mathcal{E}=\{(u,i)\mid A_{u,i}=1\}$ denotes the set of edges corresponding to observed interactions.

Based on the user-item graph $\mathcal{G}$, LLMRec, denoted as $\LLMori$, reformulates collaborative filtering as a prompt-based prediction task. Given a user $u\in\Uset$ with historical interactions $\mathcal{H}_u=\{i_1,i_2,\cdots,i_{|\mathcal{H}_u|}\}$ and a target item $i_t\in\Iset$, we encode them into textual instructions $\inpori$ using predefined hard prompt templates. Conditioned on $\inpori$, $\LLMori$ predicts whether $u$ will interact with $i_t$, which is formulated as a binary classification task with an answer $y_{u} \in \{\mbox{"YES"},\mbox{"No"}\}$(As shown in Figure~\ref{fig:CURE} and ~\ref{fig:casestudy}).
In order to tailor LLM to recommendation scenarios, a conditional language modeling objective is employed by minimizing the negative log-likelihood of generating $y_{u}$ conditioned on input $\inpori$. Formally:
\begin{equation}
\label{eq:LLMRec}
\begin{aligned}
\min \mathcal{L}_{pred}= -\sum\limits_{(\inpori,y_u)\in\all}\mbox{log}(\LLMori(y_{u,t}|\inpori,y_{u,<t}))
\end{aligned}
\end{equation}
Where $y_{u,t}$ is the \textit{t}-th token of $y_u$, and $y_{u,<t}$ is the token before $y_{u,t}$. $\LLMori(y_{u,t}|\inpori,y_{u,<t})$ signifies the predictive probability of $y_{u,t}$.


\subsection{LLMRec Unlearning}

LLMRec unlearning involves removing certain user-item interactions from a trained model $\LLMori$ without full retraining. Given a dataset $\all$
and a subset $\forget$ to be removed, we denote the retained dataset as $\remain$, where $\remain = \all \setminus \forget$,
with the conditions $\forget \cap \remain = \emptyset$. Requests for LLMRec unlearning can be broadly categorized into two types:
(i) \textit{\textbf{user/item-wise deletion}}, , which removes all interactions associated with a given user or item, and
(ii) \textit{\textbf{interaction deletion}}, which removes specific interactions.

The objective is to obtain an unlearned model $\un$ that eliminates the influence of $\forget$ while preserving performance on $\remain$. As retraining on $\remain$ to obtain the optimal model $\optimal$ is often time-consuming, our goal is to approximate $\optimal$ by updating the original model $\LLMori$ through the unlearning process as follows:
\begin{equation*}
    \LLMori \xrightarrow{\forget} \un \approx \optimal.
\end{equation*}

Most existing methods formulate the LLMRec unlearning loss $\mathcal{L}$ as a weighted sum of a forget loss $\mathcal{L}_{\mathrm{F}}$ and a retain loss $\mathcal{L}_{\mathrm{R}}$. 
More formally, the task of unlearning can be modeled in the following manner:
\begin{equation}
\label{eq:Weightsum}
\begin{aligned}
\un &= \mbox{arg} \min\limits_{\mathcal{M}}\sum\limits_{(\inpori,y_u)\in \forget}\wforget\mathcal{L}_{\mathrm{F}}(y_u|\inpori;\mathcal{M}) + \sum\limits_{(\inpori,y_u)\in \remain}\wretain\mathcal{L}_{\mathrm{R}}(y_u|\inpori;\mathcal{M}) 
\end{aligned}
\end{equation}
 In addition, $\wretain+\wforget=1$ serves as a scaling factor to balance forget and retain. The specific forms of $\mathcal{L}_{\mathrm{F}}$ and $\mathcal{L}_{\mathrm{R}}$ are introduced in section~\ref{sec:method}. Note that \textit{\textbf{user/item-wise deletion}} can be interpreted as removing all interactions incident involving the corresponding user or item.

\section{Motivation}
\label{sec:motivation}
Despite simplicity, Eq.~\ref{eq:Weightsum} fails to account for scenarios in which the gradients of the forget and retain losses move in opposing directions during gradient descent optimization~\citep{reisizadeh2025blur}, leading to either insufficient forgetting on $\forget$ or performance degradation on $\remain$. 

To illustrate this issue, we analyze the gradient structure of the unlearning objective. Specifically, let $g=\nabla \mathcal{L}(\theta)$ denote the gradient of the weighted unlearning loss with respect to the model parameters $\theta$, while $g_f=\nabla \mathcal{L}_{F}(\theta)$ and $g_r=\nabla \mathcal{L}_{R}(\theta)$ denote the gradient of forget loss and retain losses respectively. A small change of $\theta$ in the direction of negative $g$ is $\theta\leftarrow\theta-\alpha g$ with a sufficiently small step size $\alpha$. The effect of this change on the performance of two objective can be measured by:
\begin{equation}
\label{eq:Decomposing}
\begin{aligned}
\Delta\mathcal{L}
&=\wforget\Delta\mathcal{L}_F+\wretain\Delta\mathcal{L}_R\\
&=\wforget\mathcal{L}_{F}(\mathbf{\theta}-\alpha \mathbf{g})-\wforget\mathcal{L}_{F}(\mathbf{\theta})
  +\wretain\mathcal{L}_{R}(\mathbf{\theta}-\alpha \mathbf{g})-\wretain\mathcal{L}_{R}(\mathbf{\theta})\\
&=-\alpha\wforget\, (\mathbf{g}\cdot \mathbf{g}_f)
  -\alpha\wretain\, (\mathbf{g}\cdot \mathbf{g}_r)
  + o(\alpha),
\end{aligned}
\end{equation}
where the last equality is obtained by Taylor approximation. Notably, the update procedure can impede effective forgetting when $\mathbf{g}\cdot \mathbf{g}_f<0$, since it increases the forget loss. Similarly, it can degrade model utility when $\mathbf{g}\cdot \mathbf{g}_r<0$.

Based on the above analysis, we adopt a normalized alignment metric~\cite{reisizadeh2025blur} and examine the evolution of $A_r(\mathbf{\theta})$ and $A_f(\mathbf{\theta})$ throughout the unlearning procedure in Figure~\ref{fig:alignment}, where
\begin{equation}
\label{eq:normalized alignment}
\begin{aligned}
A_f(\mathbf{\theta})=\frac{\mathbf{g}\cdot \mathbf{g}_f}{||\mathbf{g}_f||^{2}}, \quad A_r(\mathbf{\theta})=\frac{\mathbf{g}\cdot \mathbf{g}_r}{||\mathbf{g}_r||^{2}}
\end{aligned}
\end{equation}

\begin{figure}
\centering
    \includegraphics[width=1.0\linewidth]{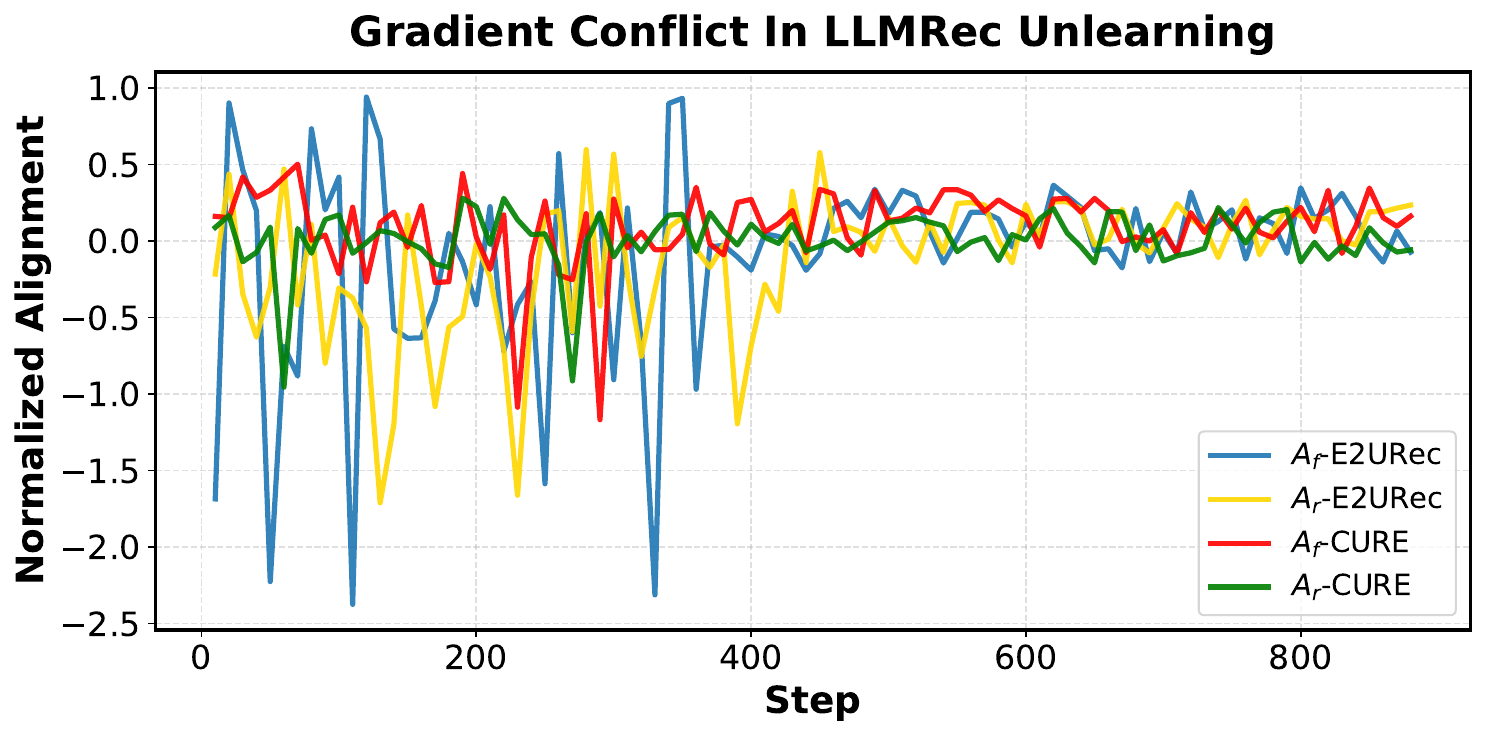}
    \caption{Normalized Alignment Values of forget and retain losses on MovieLens-1M using \textbf{Llama-2 (7B)}}

    \label{fig:alignment}
\end{figure}
We observe that the descent direction switches frequently during the early training steps of E2URec, a representative LLMRec unlearning baseline, reflecting entangled optimization signals and the presence of gradient conflicts.
Although \cite{reisizadeh2025blur} proposes a hierarchical unlearning framework that prioritizes forgetting over retention, such a trade-off is suboptimal for LLMRec unlearning, where both objectives are essential. In contrast, CURE exhibits a more stable optimization behavior, with gradient conflicts largely mitigated.

\section{Method}
\label{sec:method}
We introduce CURE, a two-stage circuit-aware framework for LLMRec unlearning. The first stage focuses on identifying key computational circuits in the original model . Leveraging the structural information encoded in 
$\LLMori$, we propose two alternatives methods: \textit{Activation Intervention}, which efficiently yields a coarse circuit estimate, and \textit{Activation Patching}, a contrastive approach for more precise localization. In the second stage, we selectively update circuit components according to their functional roles in forgetting and utility preservation, thereby mitigating gradient conflicts.

\begin{figure*}[t]
    \centering
    \includegraphics[width=1.0\linewidth,height=0.4\textheight]{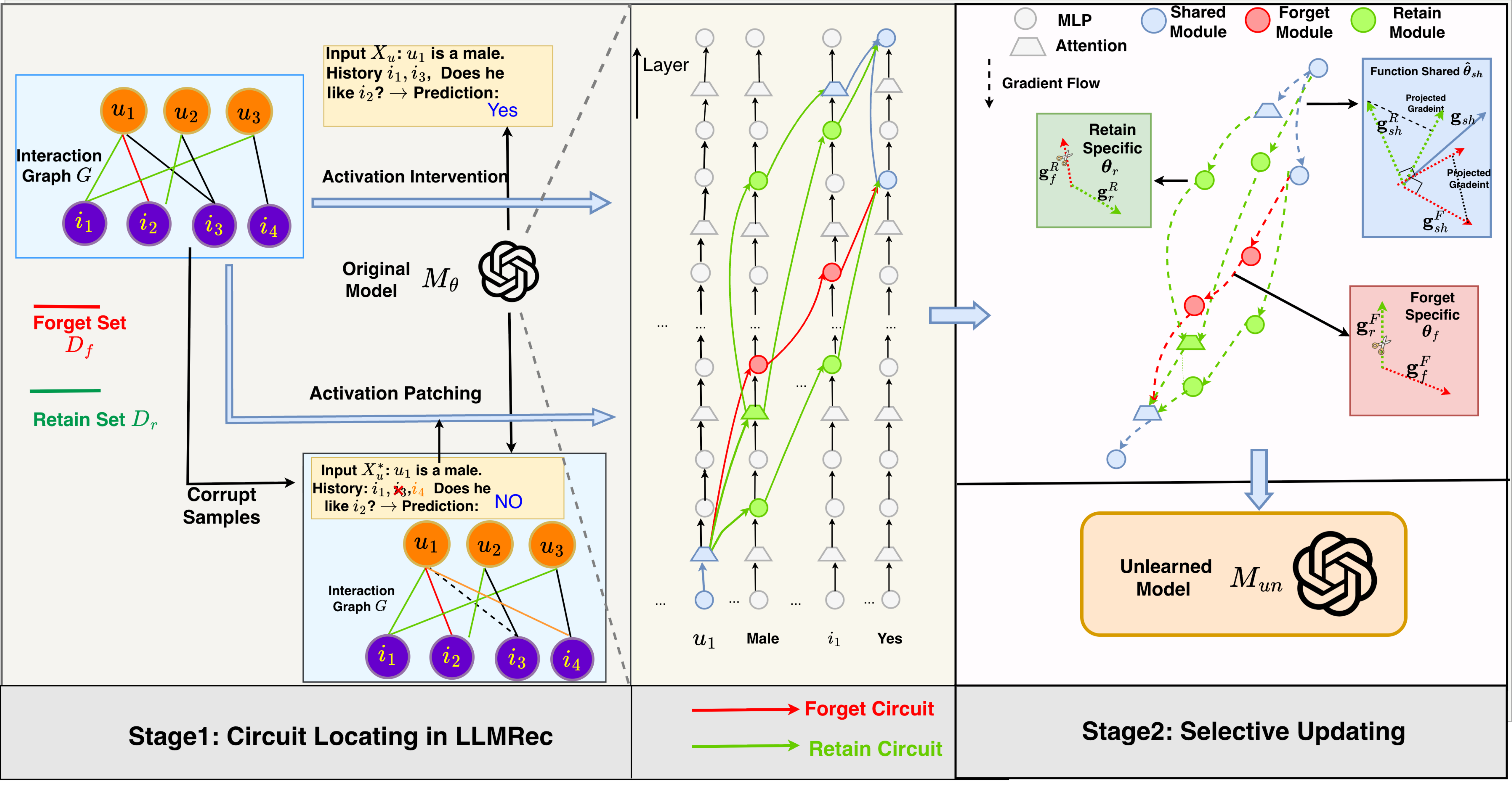}
    \caption{Schematics of CURE: (i) Locating Circuits in LLMREC; (ii)Selective Circuits Updating}
    \label{fig:CURE}
\end{figure*}


\label{sec}

\subsection{Locating Circuits in LLMRec}
\label{sec:Circuits}
To find circuits for LLMRec, we must represent the internals of the model as a computational directed acyclic graph $G^{LM}$ in which information flows from the input tokens to the output logits through intermediate neuron activations. Following~\cite{jafari2025relp}, we define attention heads and MLP modules as nodes, with directed edges specifying how the output of one node is passed to another. 
As shown in Figure~\ref{fig:CURE}, the input of a node $v_1$ is defined as the sum of the outputs of all nodes with edges pointing to $v_1$, and each edge $e=v_1 \rightarrow v_2$ represents a direct computational dependency. A circuit is defined as a subgraph that connects the input tokens to the output logits.

Given a sample $(\inpori,y_u)\in\all$, our goal is to extract the influential circuits that are faithful to model prediction. Directly evaluating the importance of individual nodes is often insufficient, as it ignores how information is propagated and combined across modules. Instead, we assign a strength score to each edge $e=v_1 \rightarrow v_2$ that quantifies its contribution to driving the prompt $\inpori$ toward the target outcomes $y_u$. Specifically, we use the change in output probability $\Delta(\inpori)=\LLMori(Yes|\inpori)-\LLMori(No|\inpori)$ to measure variations in the model’s prediction. Following~\cite{syed2024attribution}, for an edge $e \in G^{LM}$, we evaluate its impact to metrics $\Delta$ by intervening on the activation input transmitted through this edge:
\begin{equation}
\label{eq:Causal Intervention}
\begin{aligned}
I(e)=|\Delta(x_u|\mbox{do}(e))-\Delta(x_u)|
\end{aligned}
\end{equation}
We adopt the do-notation from causal inference to emphasize that this intervention modifies the information flow along a specific edge.
Although existing methods such as ACDC~\cite{conmy2023towards,hanna2024have}can be used to evaluate causal impact, they are computationally inefficient. Moreover, user–item relationships naturally form a graph structure, which can be further exploited to improve both efficiency and localization precision. Based on this observation, we propose two methods.

\subsubsection{Activation Intervention}
To quantify $I(e)$ in Eq.~\ref{eq:Causal Intervention}, we directly intervene on the information flow along edge $e=v_1 \rightarrow v_2$. 
Concretely, we mask this edge by setting the message passed from $v_1$ to $v_2$, denoted as $m_{v_1 \to v_2}=0$, to zero, while keeping all other activations unchanged. However, modern LLMs involve an enormous number of edges to evaluate, making exact interventions time-consuming. Hence, we approximate the resulting change in the metric by linearly expanding $\Delta$ with respect to the input activation using a first-order Taylor expansion, which yields an efficient estimate of the marginal contribution of a single edge $e=v_1 \rightarrow v_2$ in the computational graph $G^{LM}$
\begin{equation}
\label{eq:ActivationIntervention}
\Delta(x_u \mid \mathrm{do}(m_{v_1\to v_2}=0))
\approx
\Delta(x_u)
-
m_{v_1\to v_2}^{\top}
\frac{\partial \Delta(x_u)}{\partial v_2} .
\end{equation}
Here, with a slight abuse of notation, we use $v_2$ to be the activation input of this node. The term $m_{v_1\to v_2}^{\top}
\frac{\partial \Delta(x_u)}{\partial v_2}$
captures the first-order effect of erasing the message of $e$, where the gradient is evaluated at the original (non-intervened) forward pass.
According to Eq.~\ref{eq:ActivationIntervention}, computing $I(e)$ for all edges only requires one forward pass to record edge messages and one backward pass to obtain their gradients.


\subsubsection{Activation Patching}
Although effective, the \textit{Activation Intervention} tends to over-select modules as the input $\inpori$ grows longer, since it captures the contributions of all tokens to $y_u$ in a global manner. To localize the responsible circuits, we therefore leverage activation patching~\cite{zhang2023towards, syed2024attribution} to measure module importance through contrasting activations that lead to different prediction outcomes. 
Following~\cite{syed2024attribution}, we construct a corrupt sample $\inpcf$ that shares the same task schema as $\inpori$ but elicits a distinct output from $\LLMori$. The importance of an edge is evaluated by replacing its activation $m_{v_1\to v_2}$ induced by $\inpori$ with that $m^{*}_{v_1\to v_2}$ from $\inpcf$ during the forward pass, while keeping all other activations unchanged. A significant change in the output metric $\Delta$ indicates that the edge plays a critical role in driving the prediction.
Formally, to quantify Eq.~\ref{eq:Causal Intervention}, we linearly approximate the importance score $I(e)$ by expanding $\Delta$ as a Taylor series with respect to the edge activation:
\begin{equation}
\label{eq:ActivationPatching}
\Delta(\inpori \mid \mathrm{do}(m_{v_1\to v_2}=m^{*}_{v_1\to v_2}))
\approx
\Delta(x_u)
+
(m^{*}_{v_1\to v_2}-m_{v_1\to v_2})^{\top}
\frac{\partial \Delta(x_u)}{\partial v_2} .
\end{equation}
The second term on the right-hand side serves as an estimate of $I(e)$.
However, unlike circuit detection in reasoning tasks, constructing the corrupt sample $\inpcf$ in LLMRec is non-trivial. Two challenges arise: 1) The input $\inpori$ can be substantially longer, as it summarizes a user’s historical interactions. 2) Prior work~\cite{hanna2024have,syed2024attribution} requires $\inpori$ and its corrupt counterpart $\inpcf$ to differ minimally at the input level while inducing a significant change in prediction. As a result, it is difficult to pinpoint which tokens in $\inpori$ should be modified to construct $\inpcf$.  Formally, $\inpcf$ should satisfy the following constraints
\begin{equation}
\label{eq:counterfactual}
\inpcf
=
\arg\min_{\mathbf{x}}\ \Delta(\mathbf{x})
\quad \text{s.t.}\quad
\|\mathbf{x}-\inpori\|_{0}\le K .
\end{equation}
Here, $K$ controls the number of items to be replaced. In our setting, we set $K=1$, meaning that only a single item is allowed to be replaced in $\inpori$. In addition, we require the inserted item to remain within the user’s interest distribution to avoid introducing out-of-distribution corrupt sample $\inpcf$.

The original input $\inpori$ consists of the user history $\mathcal{H}_u$ and the target item $i_t$, whose relationships can be captured by the user–item graph $\mathcal{G}$. Fortunately, this representation allows us to exploit rich structural information to construct appropriate corrupt samples $\inpcf$.
To efficiently generate $\inpcf$, we first identify the history item in $\mathcal{H}_u$ that is most influential to predicting $i_t$ within the context of $\inpori$, and then replace it with a weakly related item.
\paragraph{\textbf{Step1: Graph-based Item Scoring}}
We first leverage \textit{Personalized PageRank} (PPR)~\cite{yang2024efficient,li2023everything} on $\mathcal{G}$ to estimate item proximity with respect to a given user $u$.
Let $\boldsymbol{\pi}_u \in \mathbb{R}^{|\mathcal{V}|}$ denote the PPR vector associated with user $u$, where each entry $\boldsymbol{\pi}_u[i]$ measures the graph-based proximity of item $i$ to $u$.
The PPR vector is defined as the stationary solution of the following equation:
\begin{equation}
\label{eq:ppr}
\boldsymbol{\pi}_u
=
\alpha\, P^\top \boldsymbol{\pi}_u
+
(1-\alpha)\boldsymbol{p}_u,
\end{equation}
where $P$ denotes the transition matrix of the user--item graph and $\alpha \in (0,1)$ is a decay factor.
The \emph{preference vector} $\boldsymbol{p}_u$ is a probability distribution that encodes the user’s preference over items. Apart from the algebraic definition, PPR also has an intuitive random-walk interpretation: starting from user $u$, the walk proceeds as follows: at each step, it either (i) moves to a neighboring node on $\mathcal{G}$ according to the transition matrix $P$ with probability $\alpha$, or (ii) jumps to a node sampled from the preference distribution $\boldsymbol{p}_u$. Then $\boldsymbol{\pi}_u$ is defined as the stationary distribution over $\mathcal{V}$ after infinitely many steps.
Given the input $\inpori$, we further bias $\boldsymbol{\pi}_u$ toward items $i\in \mathcal{H}_u$ that contribute most to the model prediction.
Accordingly, we initialize the preference vector $\boldsymbol{p}_u$ as:
\begin{equation}
\boldsymbol{p}_u(i)=
\begin{cases}
\displaystyle
\frac{\exp(\tau S(i))}{\sum_{j\in \mathcal{H}_u}\exp(\tau S(j))},
& v\in \mathcal{H}_u,\\[6pt]
0,
& \text{otherwise},
\end{cases}
\end{equation}
where $S(i)$ measures the importance of item $i$ to the prediction and $\tau$ is a temperature parameter.
 Since an item $v$ may correspond to multiple tokens in the input, we approximate its importance by aggregating token-level gradients:
\begin{equation}
S(i)=\sum_{t\in \text{tokens}(i)}
\left\lVert
\frac{\partial \Delta(x_u)}{\partial \mathbf{e}_t}
\right\rVert,
\end{equation}
where $\mathbf{e}_t$ denotes the embedding of token $t$.

In practice, we precompute approximate PPR vectors for individual items offline~\cite{yang2024efficient,zhang2024towards}.
Specifically, we run an approximate PPR algorithm with a one-hot preference vector for each item $i$ and obtain the corresponding vector $\boldsymbol{\pi}^{pre}_i$.
Given an input $\inpori$, the personalized PPR vector can be efficiently constructed as a weighted sum:
\begin{equation}
\boldsymbol{\pi}_u
=
\sum_{i\in \mathcal{H}_u}
\frac{\exp(\tau S(i))}{\sum_{j\in \mathcal{H}_u}\exp(\tau S(j))}
\boldsymbol{\pi}^{pre}_i,
\end{equation}
which follows from the linearity of Personalized PageRank~\cite{yang2024efficient} with respect to the preference vector.
In this way, personalized item scores can be efficiently obtained without performing any online graph diffusion.
\paragraph{\textbf{Step 2: Replacing Items in $\inpori$}}
As LLMRec may not perfectly align with the underlying graph structure, the solution induced by PPR can be suboptimal for $\LLMori$.
Moreover, exhaustively searching over all possible replacements is computationally prohibitive.
We therefore restrict the search space to a small set of candidate substitutions when constructing the corrupt input $\inpcf$.

Given an input $\inpori$, we first select the top-3 most influential items from the user history $\mathcal{H}_u$ according to the gradient-based importance score $S(v)$, and construct a preferred item set $\mathcal{I}_u$ by selecting the top-50 items ranked by the PPR vector $\boldsymbol{\pi}_u$.
Following~\cite{yang2024efficient}, differences between entries in $\boldsymbol{\pi}_u$ reflect relative proximity under the same source $u$; thus, for each influential item, we choose the 10 least relevant items from $\mathcal{I}_u$ as replacement candidates. Overall, this procedure yields at most 30 candidate corrupt inputs, which are evaluated by the LLM to select the final counterfactual input $\inpcf$ according to Eq.~\ref{eq:counterfactual}.


\subsubsection{Greedy Circuit Discovery}
\label{sec:greedy}
After scoring edges via Equation~\ref{eq:Causal Intervention}, we employ the greedy extraction strategy of~\cite{syed2024attribution}. Starting from the logits, we iteratively add the highest-scoring edge whose child node is already in the circuit, constructing a top-down circuit while avoiding childless nodes. This process constructs a complete circuit in a top-down manner. The resulting procedure resembles a maximization variant of Dijkstra’s algorithm, with circuit complexity controlled by the number of iterations.

\subsection{Selective Circuits Updating}
\label{sec:selupdate}
For each sample in $\forget$, we select nearby samples from the remaining data to construct a retain buffer of size $|\remain| = k|\forget|$ (typically $k=6$). For each user–item interaction to be removed, we measure structural proximity on the user–item graph $\mathcal{G}$ using Personalized PageRank (PPR) with forget nodes as sources. Retain-set edges are ranked by their PPR scores, where higher scores indicate stronger proximity to the forget set. We select the top-$k$ nearest training samples to form the retain set, enabling unlearning by refining the decision boundary between closely related forget and retain data.

Here we borrow ideas from SCRUB~\cite{kurmanji2023towards} to construct the unlearning loss, and formulate it by enforcing two essential properties. Specifically, for each deleted sample $(\inpori, y_u) \in \all$:
\begin{itemize}
    \item \textit{Deviating the prediction from the original model $\LLMori$ on $\forget$}, where deleted samples are encouraged to produce predictions under $\un$ that explicitly differ from those of the original model, ensuring that the removed information is no longer preserved.\\ 
$\mathcal{L}_{F} = -D_{KL}(\LLMori(Yes|\inpori)||\un(Yes|\inpori))$. 
    \item \textit{Maintaining the prediction of $\LLMori$ on $\remain$}, where the selected retain samples $(\mathbf{x}_{r}, y_r) \in \remain$ associated with each deleted sample $(\inpori, y_u)$ are encouraged to produce similar predictions under $\un$ as those from the original model $\LLMori$, thereby preserving the model’s utility on retained data.\\ 
    $\mathcal{L}_{R} = \frac{1}{|\remain|}D_{KL}(\LLMori(Yes|\mathbf{x}_{r})||\un(Yes|\mathbf{x}_{r}))$.
    
 \end{itemize}   
Here, $D_{\mathrm{KL}}$ denotes the KL-divergence between the two distributions. The overall unlearning objective is defined as
$\mathcal{L}=\wretain\mathcal{L}_{R}+\wforget\mathcal{L}_{F}$; During optimization, all parameters of the original model $\LLMori$ are frozen, and only the parameters of the unlearned model $\un$, which is initialized from $\LLMori$, are updated.

For each sample in $\forget$ and $\remain$, we apply the method in Section~\ref{sec:Circuits} to extract the retain and forget circuits, denoted as $\mathcal{C}_f$ and $\mathcal{C}_r$, respectively. 
Based on the detected circuits, we categorize the associated modules into different functional groups according to their roles along the circuits. Neurons within each modules are assigned to the same group and updated using group-specific policies, allowing each group to be optimized for its intented function without interference. 

The update policies are defined as follows:
\begin{itemize}
  \item \textit{Forget-Specific Neurons} $\Theta_f$:
A neuron $\theta_f \in \Theta_f$ is identified as forget-specific if 
$\theta_f \in \mathcal{C}_f \cap \mathcal{C}_r^{c}$, where $\mathcal{C}_r^{c}$ denotes the the complement of $\mathcal{C}_r$.
indicating that it contributes primarily to samples in $\forget$ rather than $\remain$. 
Accordingly, these neurons are updated solely with respect to the forget loss to enhance data removal. 
We define the gradient as $\gradfforget=\nabla_{\theta_f} \mathcal{L}_{F}$ and update
\[
\theta_f \leftarrow \theta_f - \alpha \, \gradfforget.
\]
 \item \textit{Retain-specific Neurons} $\Theta_r$:
Similarly, a neuron $\theta_r \in \Theta_r$ is identified as retain-specific if 
$\theta_r \in \mathcal{C}_r \cap \mathcal{C}_f^{c}$. As these neurons are essential for preserving the model utility, we update them only with respect to the retain loss. We define the gradient as $\gradrretain=\nabla_{\theta_r} \mathcal{L}_{R}$ and update:
\[
\theta_r \leftarrow \theta_r - \alpha \,\gradrretain .
\]

   \item \textit{Function-shared Neurons} $\Theta_{sh}$. A neuron $\theta_{sh} \in \Theta_{sh}$ is identified as function-shared if 
$\theta_{sh} \in \mathcal{C}_r \cap \mathcal{C}_f$, indicating that it is involved in high-level functions contributing to both data removal and utility maintenance. 
To alleviate gradient conflicts during optimization, CURE adopts a simple projection-based strategy~\cite{yu2020gradient,shi2023recon}. 
Specifically, when the gradients of the two objectives conflict, i.e., 
$(\gradrshare)^{\top}\gradfshare < 0$, 
we iteratively select one objective from $\mathcal{L}_r$ and $\mathcal{L}_f$ and project its gradient onto the other, removing destructive components. When $\mathcal{L}_r$ is selected, the update is defined as:
\[
\begin{aligned}
\mathbf{g}_{sh} 
&= \omega_{R}(\gradrshare- \frac{(\gradrshare)^{\top} \gradfshare}
{\|\gradfshare\|^{2}} \, \gradfshare)+ \omega_{F}(\gradfshare-\frac{(\gradfshare)^{\top} \gradrshare}
{\|\gradrshare\|^{2}}), \\
\theta_{sh} 
&\leftarrow \theta_{sh} - \alpha \mathbf{g}_{sh}.
\end{aligned}
\]

 \end{itemize}

\section{Theoretical Analysis}
\label{sec:Theory}
We provide a theoretical analysis of CURE to indicate that a single gradient update on the model parameters of CURE achieves lower loss than normal gradient descent. Let $\Theta=\{\orishare, \oriforget, \oriretain\}$ denote the parameters of components in selected circuits. After applying CURE, the model parameters are $\hat{\Theta}=\{\selshare,\selforget, \selretain \}$. An one step gradient update of $\hat{\Theta}$ is:
\begin{equation}
\label{eq:CUREupdate}
\begin{aligned}
\hat{\theta}_{sh}
&= \theta_{sh}
- \alpha\Big[
\wretain \Big(
\gradrshare
- \frac{\gradrshare \cdot \gradfshare}{\|\gradfshare\|^2}\,\gradfshare
\Big)
+ \wforget \Big(
\gradfshare
- \frac{\gradfshare \cdot \gradrshare}{\|\gradrshare\|^2}\,\gradrshare
\Big)
\Big], \\
\hat{\theta}_f &= \theta_f - \alpha\gradfforget,
\qquad
\hat{\theta}_r = \theta_r - \alpha\gradrretain .
\end{aligned}
\end{equation}
Without applying CURE, the parameters are
$\Tilde{\Theta}=\{\nshare\,\nforget, \nretain\}$. An one-step gradient update is given by:
\begin{equation}
\label{eq:SGDupdate}
\begin{aligned}
\tilde{\theta}_i &= \theta_i - \alpha(\wretain \mathbf{g}_{i}^{R} + \wforget \mathbf{g}_{i}^{F}), \quad i \in \{sh, f, r\}
\end{aligned}
\end{equation}
Then, we have the following theorem. 
\begin{theorem}
Assume that the joint loss function is defined as
$L(\Theta) = \wretain L_R(\Theta) + \wforget L_F(\Theta)$ and $\wretain+\wforget=1$. Then for any sufficiently small learning rate $\alpha>0$, we have
\begin{equation}
\label{eq:Objective}
\begin{aligned}
L(\hat{\Theta})\leq L(\Tilde{\Theta})
\end{aligned}
\end{equation}
\end{theorem}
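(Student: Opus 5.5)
The plan is a first-order comparison: expand $L$ around the common starting point $\Theta$ and compare the directional derivatives of the two updates. Write $\mathbf{d}_{\mathrm{C}}$ for the CURE step direction in Eq.~\ref{eq:CUREupdate} and $\mathbf{d}_{\mathrm{S}}$ for the joint-gradient direction in Eq.~\ref{eq:SGDupdate}, so that $\hat\Theta=\Theta-\alpha\mathbf{d}_{\mathrm{C}}$ and $\tilde\Theta=\Theta-\alpha\mathbf{d}_{\mathrm{S}}$. Assuming $L_R$ and $L_F$ are twice differentiable near $\Theta$ (as in the Taylor step of Eq.~\ref{eq:Decomposing}),
\[
L(\hat\Theta)-L(\tilde\Theta)=-\alpha\,\nabla L(\Theta)^{\top}(\mathbf{d}_{\mathrm{C}}-\mathbf{d}_{\mathrm{S}})+O(\alpha^2).
\]
It therefore suffices to show $\nabla L^{\top}\mathbf{d}_{\mathrm{C}}\ge\nabla L^{\top}\mathbf{d}_{\mathrm{S}}$, with strict inequality whenever the two directions differ, so that the linear term dominates for small $\alpha$. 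Both inner products split over the blocks $sh,f,r$, with block gradient $G_i=\wretain\mathbf{g}_i^R+\wforget\mathbf{g}_i^F$ and $\nabla L^{\top}\mathbf{d}_{\mathrm{S}}=\sum_i\|G_i\|^2$. I will treat the blocks one at a time.

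For the specific blocks I will formalize the circuit semantics as an assumption: a module outside $\mathcal{C}_r$ does not influence the retain loss to first order, so $\gradrforget=0$, and symmetrically $\gradfretain=0$. On the forget block this gives $G_f^{\top}\gradfforget=\wforget\|\gradfforget\|^2\ge\wforget^2\|\gradfforget\|^2=\|G_f\|^2$, since $0\le\wforget\le1$. The retain block is identical with $\wretain$.

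For the shared block, write $\mathbf{a}=\gradrshare$, $\mathbf{b}=\gradfshare$ and $c=\mathbf{a}^{\top}\mathbf{b}$. If $c\ge0$, no projection is applied and the two directions coincide. If $c<0$, a direct expansion shows the cross terms cancel exactly, because each projected vector is orthogonal to the gradient it was projected against. The CURE contribution becomes $\wretain^2(\|\mathbf{a}\|^2-c^2/\|\mathbf{b}\|^2)+\wforget^2(\|\mathbf{b}\|^2-c^2/\|\mathbf{a}\|^2)$. The SGD contribution is $\wretain^2\|\mathbf{a}\|^2+2\wretain\wforget c+\wforget^2\|\mathbf{b}\|^2$. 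Their difference is
\[
|c|\Big(2\wretain\wforget-|c|\big(\tfrac{\wretain^2}{\|\mathbf{b}\|^2}+\tfrac{\wforget^2}{\|\mathbf{a}\|^2}\big)\Big).
\]

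This sign analysis is the main obstacle, because the bracket is not automatically nonnegative. It is nonnegative exactly when the conflict is moderate:
\[
|\cos(\mathbf{a},\mathbf{b})|\le\frac{2\wretain\wforget\|\mathbf{a}\|\|\mathbf{b}\|}{\wretain^2\|\mathbf{a}\|^2+\wforget^2\|\mathbf{b}\|^2}.
\]
By AM--GM the right-hand side is at most $1$, with equality when $\wretain\|\mathbf{a}\|=\wforget\|\mathbf{b}\|$, i.e.\ when the two weighted gradients have equal magnitude.

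I would first try to justify this condition from the method's setting. If it cannot be justified, I would state it explicitly as a hypothesis of the theorem: a bounded-conflict assumption on $\Theta_{sh}$, or balanced weighted gradient norms on that block. Without some such condition the theorem is not true in general, so the proof must carry it.

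Summing the three blocks then gives $\nabla L^{\top}(\mathbf{d}_{\mathrm{C}}-\mathbf{d}_{\mathrm{S}})\ge0$. When this sum is strictly positive, the $O(\alpha^2)$ remainder is dominated for all sufficiently small $\alpha$. When it is zero, the directions agree on every block that matters: the shared block is unprojected and the specific blocks have $\wforget=1$ or vanishing gradient. In that case the second-order terms coincide as well, giving equality. Either way, $L(\hat\Theta)\le L(\tilde\Theta)$.
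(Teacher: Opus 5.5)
Your route is the paper's route. Both proofs expand $L(\hat\Theta)-L(\tilde\Theta)$ to first order and split it over the shared, forget-specific and retain-specific groups. On the shared group both use the projection algebra; your cross-term cancellation is what produces the paper's term $A$. On the specific groups both use a circuit-based hypothesis, the paper's term $B$. The paper, like you, also needs a bounded-conflict condition that the statement does not carry: it only obtains $A\ge 0$ under Eq.~\ref{equ:Acondition}, then appeals to gradient normalization.

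The difference is the gradient you pair the step difference with. You use the blocks of $\nabla L=\wretain\mathbf{g}^R+\wforget\mathbf{g}^F$. The paper uses $\gradrshare+\gradfshare$ on the shared group and $\gradfforget$, $\gradrretain$ alone on the specific groups. Your bookkeeping is the consistent one, and it changes the conclusions:
\begin{itemize}
\item \emph{Shared group.} Your condition is automatic exactly at weighted balance $\wretain\|\gradrshare\|=\wforget\|\gradfshare\|$, not at the paper's $\gamma=1$. For example, take $\wretain=0.6$ and antiparallel gradients of equal norm. The CURE shared step is zero, while the joint step lowers $L$ by $0.04\alpha\|\gradrshare\|^2$ to first order.
\item \emph{Specific groups.} The paper derives only $\|\gradfforget\|^2\ge\gradfforget\cdot\gradrforget$, which is not enough under the correct pairing. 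With $\gradrforget=-\gradfforget$ and $\wretain=0.6$, the forget-block gap is $2\wretain(\wforget-\wretain)\|\gradfforget\|^2<0$. So your stronger decoupling hypothesis $\gradrforget=\gradfretain=0$, or some quantitative substitute, is genuinely needed.
\item \emph{Case split.} Your rule ``no projection when $c\ge0$'' is necessary and comes from the method section. Eq.~\ref{eq:CUREupdate} as written projects unconditionally, which would make the shared gap negative for every $c>0$.
\end{itemize}

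One step of yours needs repair. In your last paragraph, a zero first-order gap does not force the two updates to coincide.
\begin{itemize}
\item On the shared block, the bracket can vanish with $c<0$ on the boundary of your condition while $\mathbf{d}_{\mathrm{C}}\neq\mathbf{d}_{\mathrm{S}}$. The directions agree there only if $\gradrshare$ and $\gradfshare$ are antiparallel with $\wretain\|\gradrshare\|=\wforget\|\gradfshare\|$.
\item If $\wforget=1$, CURE still moves the retain block by $-\alpha\gradrretain$. The joint update does not move it, because its retain-block direction is $\gradfretain=0$. The case $\wforget=0$ is symmetric on the forget block.
\end{itemize}
In these cases the sign is decided by $O(\alpha^2)$ terms you do not control. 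To close the argument, impose the bounded-conflict condition strictly and take $\wretain,\wforget\in(0,1)$. Then a zero gap forces $c\ge0$ and $\gradfforget=\gradrretain=0$, and the two updates agree exactly.
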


\noindent\textbf{Proof.} We iteratively update each parameter group while fixing the others. The loss difference between the conventional update $\tilde{\params}$ and CURE $\hat{\params}$ is analyzed via two components, $A$ and $B$:
\begin{equation}
\begin{aligned}
    &\Loss(\Tilde{\Theta}) - \Loss(\hat{\Theta}) \approx (\nshare - \selshare)^\top (\gradrshare + \gradfshare) \\
    &\quad + (\nforget - \selforget)^\top \gradfforget + (\nretain - \selretain)^\top \gradrretain \\
    &= \underbrace{-\Big[ \gradrshare \cdot \gradfshare + (\gradrshare \cdot \gradfshare)^2 \big( \tfrac{\wretain}{\|\gradfshare\|^2} + \tfrac{\wforget}{\|\gradrshare\|^2} \big) \Big]}_{A} \\
    &\quad + \underbrace{\wretain(\gradfforget - \gradrforget)^\top \gradfforget + \wforget(\gradrretain - \gradfretain)^\top \gradrretain}_{B}
\end{aligned}
\end{equation}

\noindent\textbf{For Part A:}
Let $\gamma = \|\gradrshare\| / \|\gradfshare\|$ and $\psi$ be the angle between $\gradrshare$ and $\gradfshare$.
$A \ge 0$ is guaranteed if and only if the condition holds:
\begin{equation}
\label{equ:Acondition}
    \cos \psi \ge -\frac{\gamma}{\wretain + (1 - \wretain)\gamma^2}
\end{equation}
When $\gamma$ is close to $1$, indicating that the two gradients have comparable magnitudes, equation~\ref{equ:Acondition} holds with high probability.
Notably, when $\gamma = 1$, the condition reduces to $\cos \psi \ge -1$, which is \emph{always satisfied}.
In practice, we control $\gamma$ through gradient normalization, ensuring that the magnitudes of the two gradients remain comparable.

\noindent\textbf{For Part B:} We consider neurons in each task-specific component, and denote $\Delta m=(m^{*}_{v_1\to v_2}-m_{v_1\to v_2})$ in equation~\ref{eq:ActivationPatching}.
According to the loss functions $\mathcal{L}_{F}$ and $\mathcal{L}_{R}$ defined in Section~\ref{sec:selupdate}, we observe that both losses are monotonic with respect to
$\Delta(\inpori)=\LLMori(Yes|\inpori)-\LLMori(No|\inpori)$, which is used to detected the forget and retain circuits. which is used to detect forget and retain circuits. Hence, the greedy selection policy in Section~\ref{sec:greedy} satisfies:
\begin{equation}
\begin{aligned}
    & \|\Delta m \gradfforget\|^2 \ge \|\Delta m \gradrforget\|^2, \quad \|\Delta m \gradrretain\|^2 \ge \|\Delta m \gradfretain\|^2 \\
    & \implies 2\|\Delta m \gradfforget\|^2 \ge \|\Delta m \gradfforget\|^2 + \|\Delta m \gradrforget\|^2 \\
    & \quad \ge \|\Delta m\|^2 (\gradfforget \cdot \gradrforget) \implies \|\gradfforget\|^2 - \gradfforget \cdot \gradrforget \ge 0
\end{aligned}
\end{equation}
This leads to $\|\gradfforget\|^2 - \gradfforget \cdot \gradrforget > 0$, ensuring the benefit of targeted forgetting. An analogous result can be derived $\|\gradrretain\|^2 - \gradfretain \cdot \gradrretain > 0$
\section{Experiments}

\begin{figure*}[t]
    \centering
    \includegraphics[width=1.0\textwidth]{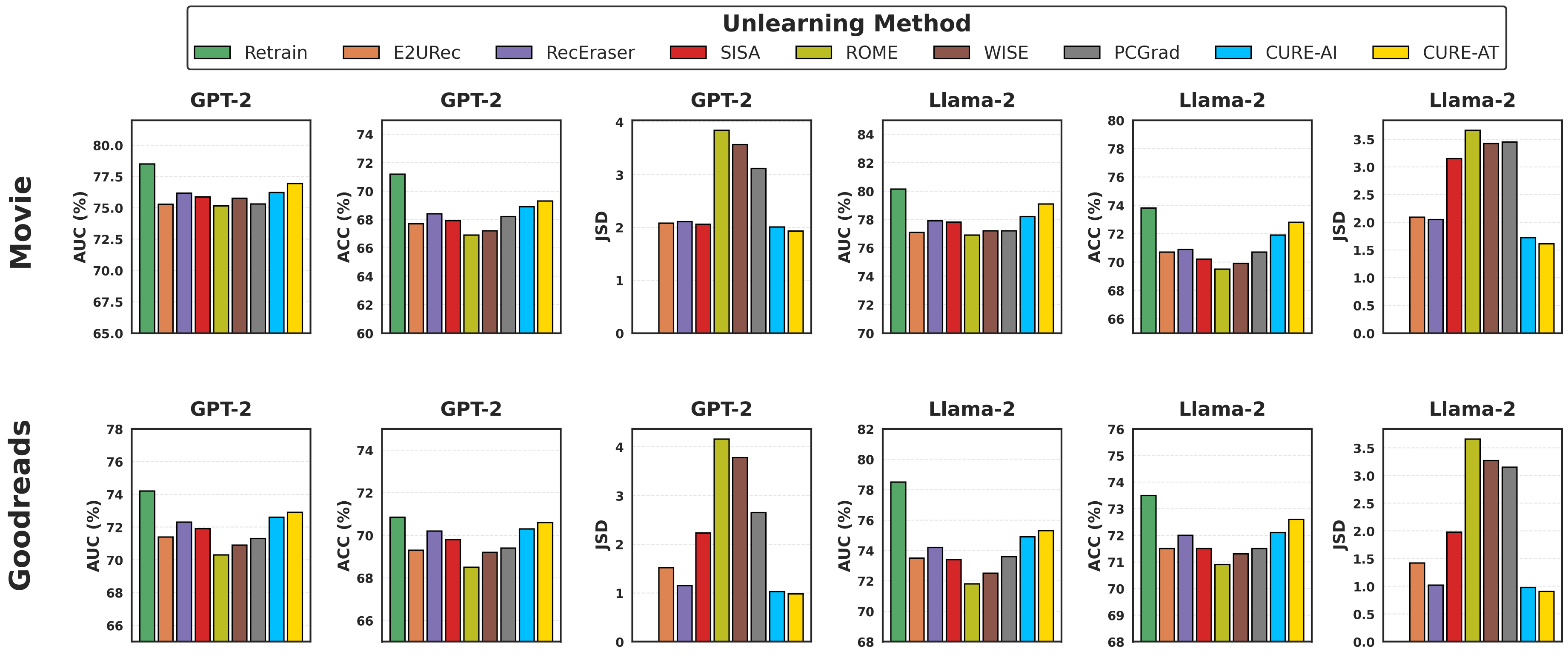}
    \caption{Unlearning effectivenss and model performance on Movie and Goodreads. }
    \label{fig:MainResult}
\end{figure*}

To evaluate the effectiveness of our proposed method, we conduct a series of experiments to address the following research questions:
\begin{itemize}
\item \textbf{RQ1:} How effective is CURE in achieving unlearning while preserving recommendation utility?
\item \textbf{RQ2:} How well does CURE mitigate gradient conflicts during unlearning?
\item \textbf{RQ3:} How do different components of CURE contribute to its overall effectiveness?
\end{itemize}
\subsection{Experimental Settings}
\noindent{\textbf{Datasets}}
We evaluate our proposed method on two widely recognized recommendation benchmarks: \textbf{MovieLens-1M (ML-1M)} and \textbf{GoodReads (GD)}. \textbf{MovieLens-1M (ML-1M)} contains movie metadata and user ratings. Following~\cite{wang2025towards}, we transform the original ratings into binary labels for LLM prompting, where ratings greater than 3 are treated as positive (mapped to “Yes”), and the remaining ratings are mapped to “No”. Similarly, \textbf{GoodReads (GD)} comprises book features and user ratings, and we apply the same binarization strategy.

\subsubsection{Implementation Details}
All baseline models and our proposed framework, \textbf{CURE}, utilize the same backbone architectures: \textbf{GPT-2} \cite{brown2020language} and \textbf{Llama-2 (7B)} \cite{touvron2023llama}. We follow standard hyperparameter configurations established in recent LLM-based recommendation literature. 

We evaluate two variants, \textbf{CURE-AI} and \textbf{CURE-AP}, which adopt Activation Intervention and Activation Patching for circuit extraction, respectively. CURE-AI identifies coarse-grained circuits, while CURE-AP achieves higher-precision localization by constructing corrupt samples based on the original input.
 
For the unlearning experiments, we follow the protocol in~\cite{wang2025towards}. Both datasets are split into training, validation, and test sets with a ratio of 7:2:1 to obtain the original model, and $20\%$ of the training data are designated as the forget set during unlearning. Regarding the specific hyperparameters for CURE, we set the intervention threshold based on the top-5\% of activation weights. For the second stage, we utilize the AdamW optimizer with a learning rate of $5 \times 10^{-5}$. To facilitate efficient training on Llama-2 (7B), we employ Low-Rank Adaptation (LoRA) with rank $r=8$. All experiments are conducted on NVIDIA A100 GPUs, and hyperparameters are tuned on the validation set to ensure optimal performance.


\subsubsection{Baselines}
\smallskip We compare \textsf{CURE} against several baselines, including exact unlearning paradigms and model editing techniques: $(i)$ \textbf{Retrain}, which refers to training from scratch without target forgotten data, 
$(ii)$ \textbf{SISA}~\cite{bourtoule2021machine}, a partition-based retraining method that aggregates predictions from sub-models trained on separate data shards;
$(iii)$ \textbf{RecEraser}~\cite{li2023ultrare}, a recommendation-specific unlearning method that preserves collaborative information through specialized partitioning;
$(iv)$ \textbf{E2URec}~\cite{wang2025towards}, an efficient unlearning framework that utilizes a teacher-student architecture to guide the unlearning process;
$(v)$ \textbf{ROME}~\cite{meng2022locating}, a model editing method that performs direct parameter updates at early layers;
$(vi)$ \textbf{WISE}~\cite{wang2024wise}, a baseline that introduces learnable parameters at later layers to incorporate new information or forget data; $(vii)$ \textbf{PCGrad}~\cite{yu2020gradient}, a gradient surgery approach for gradient collicts that projects gradients onto the normal plane of conflicting tasks.
\subsubsection{Evaluation Settings}
Our goal is to achieve precise and efficient unlearning for LLMRec while preserving recommendation utility. Following standard protocols in recommendation unlearning, we evaluate \textsf{CURE} based on the following three dimensions:

\begin{itemize}
\item \textbf{Recommendation Performance:} To evaluate the recommendation performance, we use \textbf{Area Under the ROC Curve (AUC $\uparrow$)}, \textbf{Accuracy (ACC $\uparrow$)}, and \textbf{LogLoss (LL $\downarrow$)} on test set.
\item \textbf{Unlearning Effectiveness:} To quantify how closely the unlearned model aligns with the gold-standard \textit{Retrain} model, we compute the \textbf{Jensen-Shannon Divergence (JSD $\downarrow$)} between their respective output probability distributions on the forgotten data. Lower JSD values signify a more effective elimination of the target data, indicating that the unlearned model's behavior successfully mimics a model that never encountered the forgotten samples.
\item \textbf{Unlearning Efficiency:} We measure the unlearning efficiency by \textbf{Unlearning Time ($\downarrow$)} measuring the total wall-clock time required for the unlearning process. 
\end{itemize}
\label{sec:experiments}


\subsection{Main Results (RQ1)}

We evaluate unlearning performance on two recommendation benchmarks, MovieLens-1M and GoodReads, using GPT-2 and LLaMA2-7B as backbone models. Following previous section, we assess (i) recommendation utility via AUC and ACC, (ii) unlearning effectiveness via JS-Divergence (JSD) between the unlearned and retrained models on the forgotten data, and (iii) efficiency via unlearning time.

Across all settings, results consistently show that:
(i) full retraining remains the upper bound in recommendation accuracy but is computationally prohibitive;
(ii) \textbf{CURE-AT} achieves the best overall trade-off, preserving near-retraining utility while attaining the lowest divergence from retrained models with orders-of-magnitude lower cost;
(iii) efficiency-oriented methods such as ROME and WISE sacrifice unlearning completeness, as reflected by substantially higher JSD; and
(iv) these trends remain stable across model scales and datasets, indicating strong robustness of the proposed approach.

\noindent \textbf{\textit{Overall Results}}
Across both backbones and datasets, we observe a clear trade-off between utility preservation and unlearning effectiveness. While approximate unlearning methods reduce computational cost, many incur either significant performance degradation or incomplete forgetting. In contrast, CURE-AT and CURE-AI consistently strike a favorable balance, achieving strong recommendation performance with minimal divergence from retraining.

Figure~\ref{fig:MainResult} reports results on MovieLens-1M using GPT-2 Large as the backbone. Most baseline unlearning methods (E2URec, RecEraser, SISA, ROME, WISE,PCGrad) exhibit noticeable drops in AUC and ACC relative to retraining, with JSD values generally exceeding 2.0. In contrast, CURE-AT achieves the lowest divergence (JSD = 1.93) while preserving higher recommendation quality (AUC = 76.93, ACC = 69.3). CURE-AI follows closely, outperforming all other baselines in both unlearning effectiveness and efficiency.


Figure~\ref{fig:MainResult} summarizes results using LLaMA2-7B. Compared to GPT-2 Large, all methods benefit from increased model capacity, yielding higher absolute AUC and ACC. However, the relative ranking of unlearning methods remains consistent. CURE-AT again achieves the strongest performance among non-retraining approaches (AUC = 79.1, ACC = 72.8) while also attaining the lowest divergence (JSD = 1.61). Although RecEraser and SISA partially preserve utility, their substantially longer unlearning times make them impractical for frequent or large-scale unlearning scenarios.

Figure~\ref{fig:MainResult} presents results on the GoodReads dataset using GPT-2 Large. Compared to MovieLens-1M, GoodReads exhibits a more challenging unlearning setting, with larger variance in JSD across methods. While retraining again yields the highest accuracy (AUC = 74.2, ACC = 70.85), CURE-AT achieves near-retraining performance (AUC = 72.9, ACC = 70.6) with the lowest divergence (JSD = 0.98). CURE-AI follows closely (JSD = 1.03), outperforming RecEraser and SISA in both effectiveness and efficiency.

Baseline methods such as ROME and WISE show particularly high divergence (JSD $>$ 3.7), indicating severe forgetting failure despite their fast execution. PCGrad improves over several baselines but remains inferior to CURE-based approaches in both utility preservation and unlearning completeness.

Figure~\ref{fig:MainResult} reports corresponding results using LLaMA2-7B. Similar trends persist at larger model scale: CURE-AT achieves the best trade-off between accuracy (AUC = 75.3, ACC = 72.6) and unlearning effectiveness (JSD = 0.91), substantially narrowing the gap to retraining while maintaining low computational cost. Other methods either exhibit higher divergence or incur significantly longer unlearning times, reinforcing the robustness of CURE-AT across datasets and model sizes.

\noindent \textbf{\textit{Unlearning Time Comparison}} Table~\ref{tab:time_avg_aligned} compares the efficiency of different unlearning methods. Overall, methods exhibit large variance in computational efficiency. Retraining is the most time-consuming method, incurring prohibitively expensive cost than other unlearning methods. RecEraser and SISA are consistently the most expensive approaches, requiring up to tens of thousands of seconds on LLaMA2-7B. While these methods partially preserve recommendation utility, their reliance on repeated retraining or multiple model shards makes them impractical for large-scale or frequent unlearning scenarios. In contrast, \textbf{CURE-AI} and \textbf{CURE-AT} consistently achieve strong efficiency across both MovieLens-1M and GoodReads, with unlearning times comparable to or faster than ROME and WISE, while maintaining significantly lower JSD divergence and higher recommendation accuracy. Notably, \textbf{CURE-AI} is $\mathbf{3.5}\times$ faster and \textbf{CURE-AT} is $\mathbf{3.3}\times$ than E2URec. Although slightly slower than WISE and ROME due to their localized updating strategy, CURE achieves significantly better performance.


\begin{table}[t]
\centering
\caption{Unlearning time comparison (s) $\downarrow$.}
\label{tab:time_avg_aligned}
\begin{tabular}{l|cc|cc}
\toprule
& \multicolumn{2}{c|}{\textbf{GoodReads}} & \multicolumn{2}{c}{\textbf{MovieLens-1M}} \\
Method & GPT-2 & LLaMA-2 & GPT-2 & LLaMA-2 \\
\midrule
Retrain        & 15,800  & 128,000 & 29,300  & 132,300 \\
\midrule
E2URec         & 1,900   & 13,600  & 3,200   & 21,800  \\
RecEraser      & 4,200   & 32,500  & 7,100   & 49,700  \\
SISA           & 3,700   & 29,800  & 6,800   & 45,300  \\
ROME           & 600     & 3,600   & 1,700   & 12,100  \\
WISE           & 800     & 7,200   & 1,200   & 9,800   \\
PCGrad         & 2,600   & 16,700  & 3,600   & 23,300  \\
\midrule
CURE-AI (Ours) & 600     & 3,900   & 1,500   & 10,200  \\
CURE-AT (Ours) & 600     & 4,100   & 1,700   & 11,300  \\
\bottomrule
\end{tabular}
\end{table}

\subsection{Effectiveness in Mitigating Gradient Conflicts(RQ2)}
\noindent{\textbf{CURE greatly reduces the occurence of conflicting gradients}} As shown in Figure~\ref{fig:frequency}, we compare the distribution of $\cos \psi$ before and after applying CURE on \textbf{MovieLens-1M (ML-1M)}. The results show that CURE substantially suppresses severely conflicting gradient pairs ($\cos \psi \in [-1, -0.02)$), reducing their proportion by at least $55\%$ and up to $76\%$ compared with E2URec. In contrast, PCGrad and ROME yields only marginal reductions, and in some cases even increase the proportion of conflicting gradients. An interesting observation is that WISE also exhibits fewer conflicting gradients, likely due to its restriction to updating later layers, which implicitly avoids interference with shared early representations.
\subsection{In-depth Analysis(RQ3)}

\subsubsection{Ablation Study}
We examine the effect of $\wretain$, which balances the retain and forget losses, as shown in Table~\ref{tab:ablation_omega}. The results shows the necessity of jointly optimizing $\mathcal{L}{\mathrm{F}}$ and $\mathcal{L}{\mathrm{R}}$ to achieve a favorable trade-off. When $\wretain \geq 0.6$, performance remains stable; however, for $\wretain \leq 0.4$, AUC drops sharply. Accordingly, we set $\wretain = 0.6$ in all experiments.

\subsubsection{Why does Circuit-aware Unlearning work} To validate the  of our circuit-aware approach, we compare CURE with two parameter-efficient unlearning baselines, ROME and WISE, by examining the modules they modify. As showen in Figure~\ref{fig:casestudy}, we observe that WISE primarily edits later layers, while ROME focuses on early layers. However, both methods operate on isolated modules within the LLM and fail to capture several critical components involved in the unlearning procedure. 
Instead, CURE identifies complete end-to-end circuits spanning all relevant modules, and propagates targeted modifications along the information flow from the input to the logits. Moreover, each module within an identified circuit attends to semantically relevant input tokens. 
For example, movies sharing the same genre as the target movie
(e.g., \emph{GoldenEye} and \emph{Congo}) are primarily attended to by \texttt{mlp8}, \texttt{mlp9}, and \texttt{mlp11}, enabling CURE to forget such interactions in a transparent and interpretable manner.
\begin{figure}
\centering
    \includegraphics[width=1.0\linewidth]{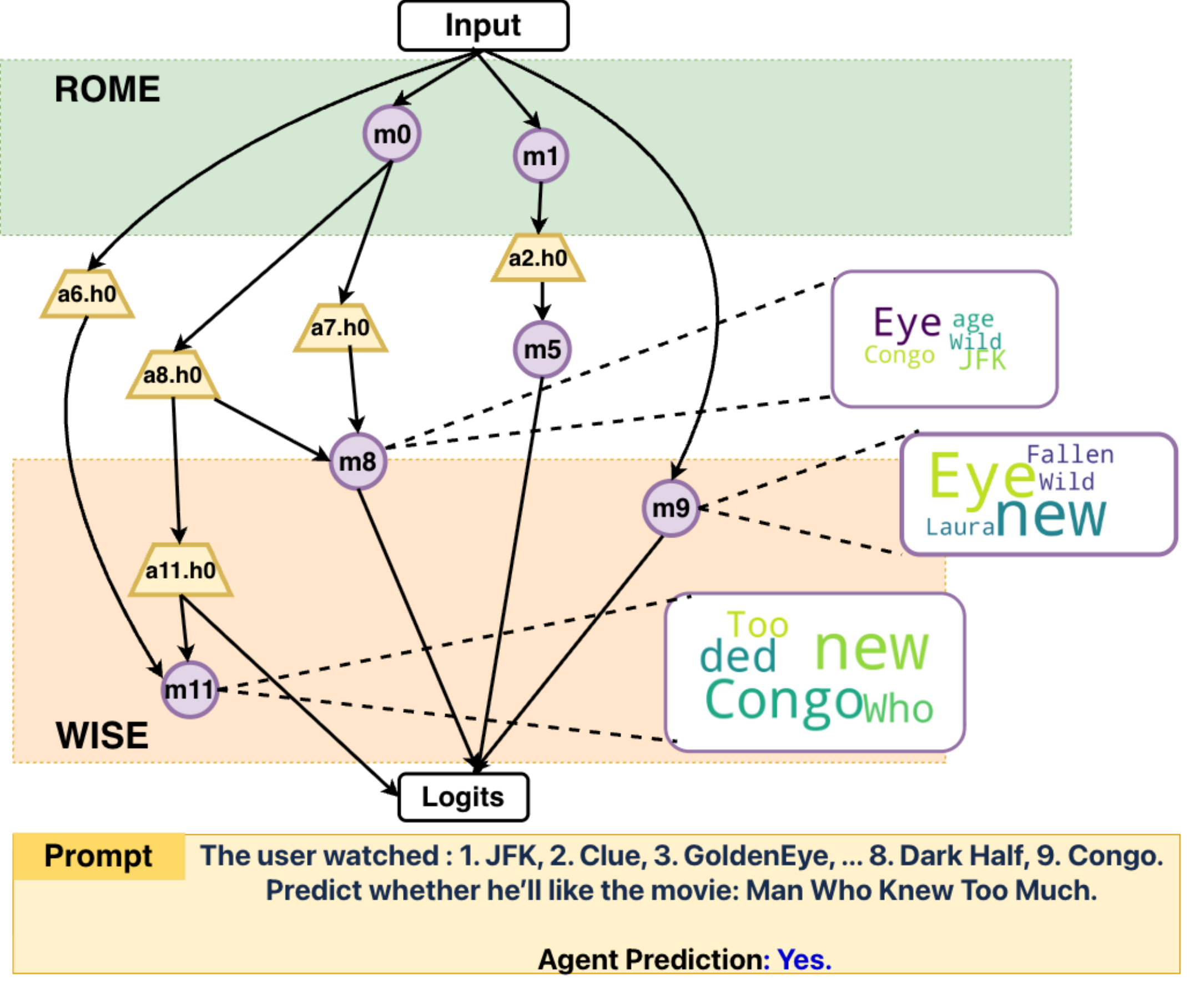}
    \caption{Transparent comparison of circuit-aware unlearning with ROME and WISE}

    \label{fig:casestudy}
\end{figure}

\begin{figure}
\centering
    \includegraphics[width=1\linewidth]{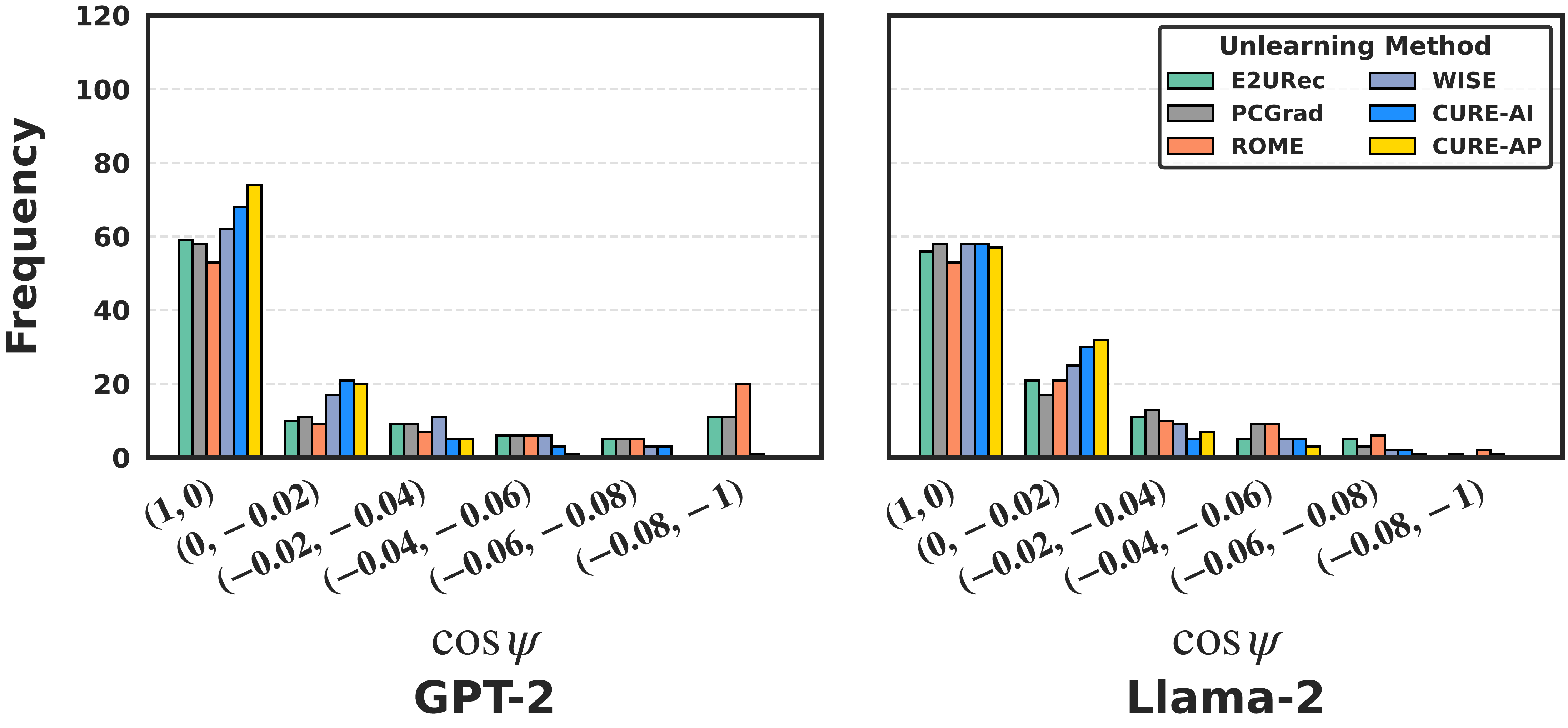}
    \caption{The distribution of gradient conflicts ($\cos \psi$) on GoodReads. The left and right columns use GPT-2 Large and Lamma-2 7B as backone, respectively }

    \label{fig:frequency}
\end{figure}

\begin{table}[t]
\small
\centering
\caption{Ablation study of $\omega_r$ on GoodRead across different backbone. Best performance is \textbf{bold} and second best is \underline{underlined}.}
\begin{tabular}{l|ccc|ccc}
\toprule
\multirow{2}{*}{$\omega_r$} & \multicolumn{3}{c|}{\textbf{Llama-2}} & \multicolumn{3}{c}{\textbf{GPT-2}} \\
\cmidrule{2-7}
 & AUC $(\uparrow)$ & ACC $(\uparrow)$ & JSD $(\downarrow)$ & AUC $(\uparrow)$ & ACC $(\uparrow)$ & JSD $(\downarrow)$ \\
\midrule
0.2 & 71.3 & 69.1 & \underline{0.91} & 69.1 & 67.5 & \textbf{0.94} \\
0.4 & 73.5 & 71.2 & \textbf{0.90} & 71.2 & 69.5 & \underline{0.98} \\
0.6 & \textbf{75.3} & \textbf{72.1} & \underline{0.91} & \textbf{72.9} & \textbf{70.6} & \underline{0.98} \\
0.8 & \underline{74.9} & \underline{71.8} & 0.95 & \textbf{72.9} & \underline{70.5} & 1.05 \\
\bottomrule
\end{tabular}
\label{tab:ablation_omega}
\end{table}

\section{Related Work}
\label{sec:Relate}

\subsection{Unlearning for LLM-based Recommendation.} 
Traditional recommendation unlearning methods, such as RecEraser \cite{chen2022recommendation} and AltEraser \cite{liu2022forgetting}, focus on partitioning collaborative data to safeguard privacy but are ill-suited for the massive parameter space of Large Language Models (LLMs). Conversely, general LLM unlearning often relies on approximate methods like gradient ascent \cite{yao2024large} or in-context label flipping, which frequently trigger catastrophic forgetting and degrade recommendation utility. To mitigate these issues, recent LLMRec-specific frameworks have adopted parameter-efficient fine-tuning (PEFT) \cite{zhang2025parameter}. E2URec\cite{wang2025towards} introduces a teacher-student architecture to guide the unlearning process via minimal LoRA updates, while the Adapter Partition and Aggregation (APA) framework employs data sharding and adapter retraining to achieve exact unlearning. However, these methods typically treat the model as a black box and apply uniform updates, leading to gradient conflicts between forgetting and retaining objectives. Our framework, \textsf{CURE}, addresses this by moving beyond uniform updates to a more granular, component-specific optimization strategy.

\subsection{Circuit Discovery}
Circuit discovery is the task of identifying sparse, functional subgraphs within a neural network that are causally responsible for implementing specific capabilities or behaviors \cite{conmy2023towards}. This paradigm shifts the focus from global parameter analysis to localizing the "essential computation" for a particular task. Early manual investigations, such as those by \cite{wang2022interpretability} and \cite{hanna2023does}, utilized causal mediation analysis and activation patching to uncover circuits for indirect object identification and mathematical reasoning in small-scale models. However, the manual search space grows exponentially with model depth, leading to the development of automated frameworks. ACDC \cite{conmy2023towards} automates circuit identification by recursively pruning edges based on their contribution to model faithfulness. While effective, its reliance on iterative testing renders it too computationally expensive for the scale of modern LLMs. To overcome this, Subnetwork Probing \cite{cao2021low} treats circuit identification as a mask-learning problem, optimizing for both fidelity and sparsity. More recently, \textbf{Edge Attribution Patching (EAP)} \cite{syed2024attribution} has emerged as a high-efficiency alternative, leveraging gradient-based importance scores to approximate the effect of interventions with minimal forward and backward passes. While these techniques have primarily been applied to linguistic benchmarks, we leverage circuit discovery techniques to extract the core circuits underlying item recommendation in LLM based recommendation. This structural decomposition provides the necessary transparency to disentangle the model into \textit{forget-specific} and \textit{retain-specific} modules~\citep{cheng2023gnndelete,fan2023salun,cheng2023multimodal}, bridging the gap between mechanistic interpretability and trustworthy unlearning in recommender systems.

\subsection{Gradient Conflicts in Unlearning}
The optimization of unlearning objectives often mirrors the challenges of multi-task learning, where a model must simultaneously satisfy competing goals. In the context of unlearning, a fundamental tension exists between the \textit{forgetting} objective (erasing specific data) and the \textit{retaining} objective (maintaining model utility) \cite{patel2025learning}. Previous studies have identified that these dual goals frequently lead to detrimental gradient interference, where the update direction for one task adversely impacts the other \cite{yu2020gradient}. 
While PCGrad \cite{yu2020gradient} introduced a model-agnostic ``gradient surgery'' approach—projecting conflicting gradients onto their respective normal planes—this method does not account for the unique functional structure of LLMs. Recent analysis in BLUR \cite{reisizadeh2025blur} shows that these gradient conflicts are even more severe in Large Language Models (LLMs). This complexity requires specialized optimization strategies to ensure that forgetting specific data does not accidentally damage the model's performance on remaining tasks. Our work, \textsf{CURE}, builds on these insights by using circuit discovery to physically isolate the parameters where these conflicts occur, allowing for a more targeted resolution than previous gradient-projection methods.

\section{Conclusion}
In this paper, we propose CURE, a circuit-aware unlearning framework that addresses the challenges of gradient conflicts in LLMRec unlearning. By leveraging mechanistic interpretability to disentangle computational circuits into functionally distinct modules, CURE enables precise, task-specific parameter updates that effectively remove sensitive information while preserving model utility. Our evaluation demonstrates that CURE outperforms state-of-the-art baselines with an $18\%$ improvement in unlearning efficiency and a $6\%$ gain in utility, while achieving a $3.5\times$ speedup. By shifting from black-box updates to transparent, circuit-level interventions, CURE provides a robust and efficient solution for privacy-preserving recommendation.

\bibliographystyle{ACM-Reference-Format}
\balance
\bibliography{references}










\end{document}